\title{Characterisation of the $\chi$-index and the $rec$-index}
\author[1]{Mark Levene mark@dcs.bbk.ac.uk}
\author[1]{Trevor Fenner trevor@dcs.bbk.ac.uk}
\author[2]{\authorcr Judit Bar-Ilan  Judit.Bar-Ilan@biu.ac.il}
\affil[1]{Department of Computer Science and Information Systems, \authorcr University of London, London WC1E 7HX, U.K.}
\affil[2]{Department of Information Science, Bar-Ilan University, Ramat Gan, Israel}
\date{}
\begin{document}

\maketitle

\newtheorem{theorem}{Theorem}[section]
\newtheorem{corollary}[theorem]{Corollary}
\newtheorem{proposition}[theorem]{Proposition}

\begin{abstract}

Axiomatic characterisation of a bibliometric index provides insight into the properties that the index satisfies and facilitates the comparison of different indices.
A geometric generalisation of the $h$-index, called the $\chi$-index, has recently been proposed to address some of the problems with the $h$-index, in particular, the fact that it is not scale invariant, i.e.,
multiplying the number of citations of each publication by a positive constant may change the relative ranking of two researchers.
While the square of the $h$-index is the area of the largest square under the citation curve of a researcher, the square of the $\chi$-index, which we call the $rec$-index (or {\em rectangle}-index), is the area of the largest rectangle under the citation curve. Our main contribution here is to provide a characterisation of the $rec$-index via three properties: {\em monotonicity}, {\em uniform citation} and {\em uniform equivalence}. Monotonicity is a natural property that we would expect any bibliometric index to satisfy, while the other two properties constrain the value of the $rec$-index to be the area of the largest rectangle under the citation curve. The $rec$-index also allows us to distinguish between {\em influential} researchers who have relatively few, but highly-cited, publications and {\em prolific} researchers who have many, but less-cited, publications.

\end{abstract}

\noindent {\it Keywords: }{$h$-index, $\chi$-index, $rec$-index, bibliometric index, 
core publications, quantity versus quality, axiomatic characterisation}

\section{Introduction}
\label{sec:introduction}

Axiomatic characterisation of a bibliometric index \cite{MARC09,BOUY14} provides insight into the properties that the index satisfies and facilitates the comparison of different indices. (Axiomatic characterisation is used in a number of areas for the same purpose.)

\smallskip

Since Garfield's foundational work in bibliometrics \cite{GARF79}, a plethora of bibliometric indices have been suggested \cite{ROEM15,TODE16}, from a simple count of the total number of citations to the more sophisticated $h$-index \cite{HIRS05}.
These have often been motivated by the ongoing debate between quantity (as measured by the number of publications) and quality (as measured by the number of citations to those publications).
A review of many of these \cite{WILD14}, where a comparison of 108 bibliometric indicators was presented, concluded that, in order to gauge the overall impact of a researcher, several indicators should be used. In particular, many variants of the $h$-index have been proposed and, in \cite{BORN11}, it was shown that generally there is high correlation between the $h$-index and 37 of its variants. Moreover, a critical view of the $h$-index and its variants, which was provided in \cite{SCHR18}, argued that it is not a good indicator of recent impact and suggested instead a timed $h$-index over a sliding time window.
Another criticism of the $h$-index and its variants is that it treats all citations in the same way, ignoring the relevance and impact of citations. This issue can be addressed by Markovian methods, such as PageRank \cite{LEVE10}, giving rise to indices based on author or publication-level citation networks, such as the Eigenfactor score \cite{WEST13}, which can model influence between researchers and publications.
However, Markovian indicators have the disadvantage of being significantly more complex to compute than those based on citation counts, and
there is no conclusive evidence that they actually outperform rankings based on citation counts \cite{FIAL15}.

\smallskip

We mention two other issues concerning bibliometic indices that have been recently addressed in a formal setting.
The first is that of normalising citation counts across different fields. For example, in \cite{BOUY16}, the authors investigated fractional counting of citations, where the value of a citation is inversely proportional to the numbers papers being cited. They provide a characterisation of the ranking induced by such fractional counting. The second issue is aggregation of bibliometric indices when there are several conflicting indices and there is no compelling reason to chose one over another. For example, in \cite{SUBO18}, the authors investigated ranking methods from social choice theory in order to provide an axiomatic analysis of aggregation of bibliometric indices.

\smallskip

In this paper we focus on a (two-dimensional) geometric generalisation of the $h$-index, called the $\chi$-index \cite{FENN18b}, which has recently been proposed to address some of the problems with the $h$-index, in particular the fact that it is not {\em scale invariant} \cite{PERR16}.

\smallskip

The $h$-index is determined by the largest square that fits under the citation curve of a researcher when plotting the number of citations to individual publications in decreasing order. On the other hand, the $\chi$-index is determined by the largest area rectangle that fits under the citation curve.
The $rec$-index (or {\em rectangle}-index) is defined to be the area of this rectangle and is the square of the $\chi$-index.
The $rec$-index is thus a member of the class of geometric indices that approximate the area under the citation curve of a researcher. Such indices can address problems attached to, for example, the citation count, which takes into account
all citations. The $h$-index penalises both highly-cited publications and publications with only a few citations.
On the other hand, the $\chi$-index is more balanced than the $h$-index in this respect, as it allows us to cater for both {\em influential} researchers with a few very highly-cited publications and {\em prolific} researchers who may have many publications but relatively few citations per publication.

\smallskip

Our main contribution here is to provide a first characterisation of the $rec$-index via three properties: {\em monotonicity}, {\em uniform citation} and {\em uniform equivalence}. Monotonicity is a natural property that we would expect any bibliometric index to satisfy, while the other two properties relate to the rectangle under the citation curve that determines the index. Uniform citation specifies that when the shape of the citation curve is uniform, i.e. rectangular, then the value of the index is the total citation count of all the publications. Complementing this property, uniform equivalence specifies that when the shape of the citation curve is not uniform, i.e. not rectangular, the value of the index is equal to that of some uniform citation curve that can obtained by omitting some number of citations.

\smallskip

We note that many different properties may be used to characterise a bibliometric index, and there is no general agreement on which are the most compelling. For example, there are a number of distinct characterisations of the $h$-index, such as those presented in \cite{WOEG08} and \cite{QUES11}.
However, if a number of properties are proved to characterise a given index, the acceptance of any other index would necessitate the violation of at least
one of the properties used to characterise the given index.

\medskip

The rest of the paper is organised as follows.
In Section~\ref{sec:chi-index}, we introduce the $rec$-index.
In Section~\ref{sec:axioms}, we define and discuss properties of the $rec$-index and other indices.
Then, in Section~\ref{sec:proof}, we present an axiomatic characterisation of the $rec$-index.
Finally, in Section~\ref{sec:conc}, we give our concluding remarks.

\section{The $rec$-index and related bibliometric indices}
\label{sec:chi-index}

We assume that a researcher publishes $n$ publications, where $n \ge 0$, which are represented by a {\em citation vector} of {\em positive} integers, ${\bf x} = \left<x_1, x_2, \ldots, x_n\right>$, where $x_i$ is the number of citations to publication $i$, sorted in descending order, i.e. $x_i \ge x_j$ for $1 \le i < j \le n$. (We note that one could consider only a subset of a researcher's publications in the citation vector, for example, by only allowing journal publications, or publications in high-impact venues.)

\smallskip

The {\em citation curve} is the curve arising from plotting the number of citations against the ranking of the publications as a histogram specified by the citation vector.

\smallskip

A {\em bibliometric index} is a function $f$ that maps citation vectors to the set of non-negative real numbers.
As in \cite{WOEG08}, we assume the {\em baseline condition} that, for the empty citation vector ${\bf x} = \left<\right>$,
we have $f({\bf x})=0$.

\smallskip

In this paper, we concentrate on characterising the $rec$-index directly, following the approach adopted in \cite{WOEG08,WOEG08a,QUES11,QUES11a}, rather than characterising the bibliometric ranking induced by the index, as was done in \cite{MARC09,BOUY14}.
This stems from our particular interest in the properties of geometric indices.
It is evident that any two indices, such as the $\chi$-index and the $rec$-index, that are monotonic transformations of each other are equivalent with respect to the induced rankings.

\smallskip

The {\em citation count} index for a citation vector ${\bf x} =  \left<x_1,x_2,\ldots,x_n\right>$ is the L1 norm of ${\bf x}$, denoted by $\lVert {\bf x} \rVert$ and defined by
\begin{equation}\label{eq:cit}
\lVert {\bf x} \rVert = \sum_{i=1}^n x_i.
\end{equation}
\smallskip

We say that a citation vector ${\bf x} = \left<x_1,x_2,\ldots,x_n\right>$ is {\em dominated} by a citation vector ${\bf y} = \left<y_1,y_2,\ldots,y_m\right>$,
written as ${\bf x} \sqsubseteq {\bf y}$, if $n \le m$ and $x_i \le y_i$ for all $i$, $1 \le i \le n$.

\medskip

It is worth recalling some known bibliometric indices \cite{WILD14,TODE16}: the citation count, as defined in (\ref{eq:cit}); the publication count $n$;
the maximum citation index $x_1$; the Euclidean index $E({\bf x})$ \cite{PERR16}, which is the Euclidean norm of ${\bf x}$, i.e., $E({\bf x}) = \sqrt{\sum_{i=1}^{n} x_i^2}$; the $h$-index \cite{HIRS05}; and the $g$-index \cite{EGGH06}, which is a variant of the $h$-index giving extra weight to highly-cited publications.

\smallskip

The $h$-index \cite{HIRS05}, in particular, has gained popularity due to its relative simplicity, ease of calculation, and its ingenious method of combining the quality and quantity of a researcher's outputs. It is defined as the maximum number $h$ of the researcher's publications such that each has at least $h$ citations, i.e. for a citation vector, ${\bf x} = \left<x_1, x_2, \ldots, x_n\right>$ the $h$-index is the largest $h$ for which $x_h \ge h$.

\smallskip

To motivate the $\chi$-index, consider the following three citation vectors, the citation curves of which are depicted in Figure~\ref{figure:example}:
(i) $\left< 100 \right>$, i.e. 1 publication with 100 citations, (ii) $\left< 10,10, \cdots,10 \right>$, i.e. 10 publications with 10 citations each, and (iii) $\left< 1,1, \cdots, 1 \right>$, i.e. 100 publications with 1 citation each. (Note that the diagram in Figure~\ref{figure:example} is not drawn to scale.)

\begin{figure}[htb]
\centering{\includegraphics[scale=0.125]{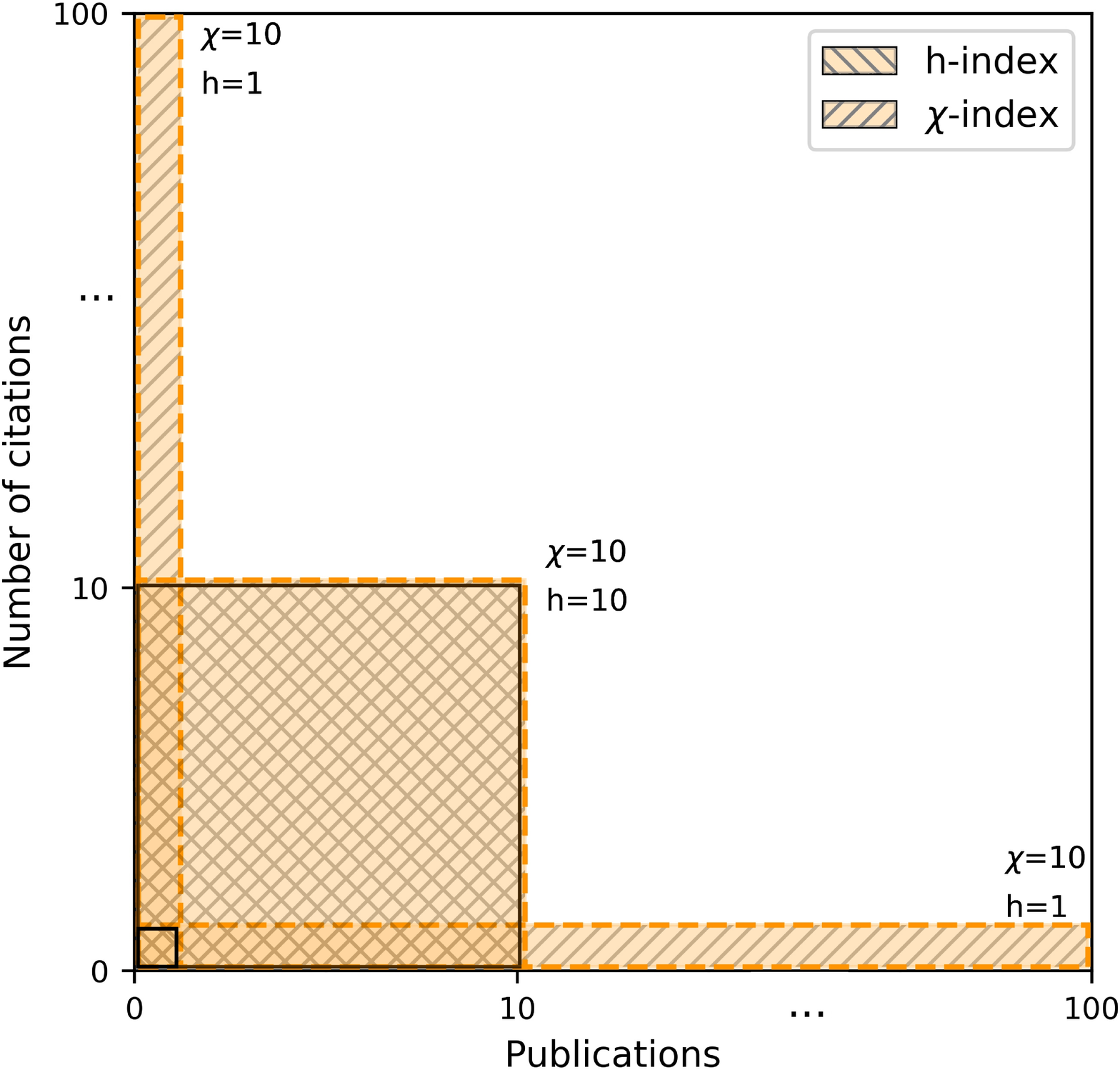}}
\caption{\label{figure:example} Example of the geometric interpretation of the $h$ and $\chi$ indices.}
\end{figure}
\smallskip

The $\chi$-index is defined as the square root of the maximum area rectangle that can fit under the citation curve, while the $h$-index is the square root of the maximum area square that can fit under the citation curve.

\smallskip

Formally, we first define the $rec$-index (or {\em rectangle}-index) of a researcher with citation vector ${\bf x}$ by
\begin{equation}\label{eq:ck-index}
rec({\bf x}) = \max_{i} i x_i.
\end{equation}
\smallskip

\noindent The $\chi$-index \cite{FENN18b} is then defined by $\chi({\bf x}) = \sqrt{rec({\bf x})}$.

\medskip

Returning to our example shown in Figure~\ref{figure:example}, we see that all three researchers have a $\chi$-index of $10$, while researcher (ii) has an $h$-index of $10$, but researchers (i) and (iii) both have an $h$-index of only $1$. The $h$-index may be seen as balancing quality, on the one hand, by favouring publications with a higher number of citations and quantity, on the other hand, by taking into account all publications with a sufficient number of citations. However, such an approach disadvantages a researcher, such as (i), with a few very highly-cited publications, who may have carried out some influential seminal research, and it also disadvantages a prolific researcher, such as (iii), who may have many publications but fewer citations per publication. Now, if we let $k$ denote a value of $i$ that maximises $i x_i$ in (\ref{eq:ck-index}), the $rec$-index can distinguish between
more {\em influential} researchers for which $x_k > k$, such as (i), and  more {\em prolific} researchers for which $k > x_k$, such as (iii). In this sense the $\chi$-index avoids the debate of number of citations versus number of publications by awarding all three researchers the same $\chi$-index of $10$.

\medskip

The $rec$-index is a member of the class of (two-dimensional) geometric indices, as is the square of the $h$-index, and also the half the square of the $w$-index \cite{WOEG08}, which is the area of the maximal isosceles right-angled triangle under the citation curve; formally, the $w$-index is the largest integer $w$ such that the citation vector ${\bf x}$ contains $w$ distinct publications with at least $1, 2, \ldots, w$ citations, respectively.
In a more general setting, the dimension of an index can be formally defined  \cite{PRAT17}, and is related to the  requirement of {\em dimensional homogeneity} from physics, that we may only compare quantities that have the same units.

\smallskip

Geometric indices are actually quite natural, as their goal is to consider the area under the citation curve in order to encapsulate the essential citations for a set of {\em core} publications that in some sense represent the output of a researcher.
We note that the $rec$-index, the square of the $h$-index, the publication count, and the maximum citation index all include the same number of citations for each core publication.

\smallskip

The citation count, which includes all publications with at least one citation in the core, is of course a reasonable bibliometric index. However, it is often argued that the citation count is problematic; in particular, it may be inflated by a small number of publication having a large number of citations or it may be sensitive to a long tail of publications each having only few citations. The axiomatic characterisation we describe here can be viewed as contributing to this debate by discussing several characteristics of geometric indices.

\section{Some properties of bibliometric indices}
\label{sec:axioms}

In this section we define a variety of properties of bibliometric indices, almost all of which are satisfied by the $rec$-index, and then, in Section~\ref{sec:proof}, we show that {\em monotonicity} combined with {\em uniform citation} and {\em uniform equivalence} characterise the $rec$-index.

\smallskip

We assume throughout this section that $f(\cdot)$ is the index under consideration.
The first property, monotonicity, is a natural requirement for any bibliometric index, stating that adding citations to the citation vector should not decrease the value of the index:
\begin{enumerate}
\item[] {\bf Monotonicity (M):}
If citation vector ${\bf x}$ is dominated by citation vector ${\bf y}$, i.e. ${\bf x} \sqsubseteq {\bf y}$, then $f({\bf x}) \le f({\bf y})$.
\end{enumerate}
\smallskip

\noindent
It is easy to verify that all the indices we consider are monotonic.
We note that the citation count satisfies the stronger property of {\em strict monotonicity} {(\bf SM)}, in which $f({\bf x}) < f({\bf y})$ when ${\bf x} \ne {\bf y}$. It also satisfies several other desirable properties, such as {\em rank independence} (adding a new publication with a given number of citations to two citation vectors does not change their relative ranking) and {\em rank scale invariance} (multiplying the number of citations of each publication by a positive constant does not change the relative ranking of two citation vectors) \cite{PERR16}. Neither the $\chi$-index nor the $h$-index are rank independent. However, the $\chi$-index is rank scale invariant, but the $h$-index is not. Moreover, the $rec$-index satisfies the following stronger property of (linear) scale invariance:
\begin{enumerate}
\item[] {\bf Scale invariance (SI):}
$f(C {\bf x}) = C f({\bf x})$, for any positive constant $C$.
\end{enumerate}
\smallskip

\noindent
It is easy to see that {\em scale invariance} implies rank scale invariance.

\smallskip

A natural form of symmetry can be attained via the {\em conjugate partition} of a citation vector ${\bf x} = \left<x_1, x_2, \ldots, x_n\right>$, which is the {\em publication vector} ${\bf p} = \left<p_1, p_2, \ldots, p_m\right>$, where $m = x_1$ and $p_i$ is the number of publications with at least $i$ citations \cite{WOEG08}. Geometrically, the publication vector is obtained by reflecting the geometric representation of the citation vector along the main diagonal. This is shown in Figure~\ref{figure:cp} for a citation vector $\left<6,4,3,1\right>$, on the left, and its conjugate partition publication vector $\left<4,3,3,2,1,1\right>$, on the right. This motivates the following property.

\begin{figure}[htb]
\centering{\includegraphics[scale=0.55]{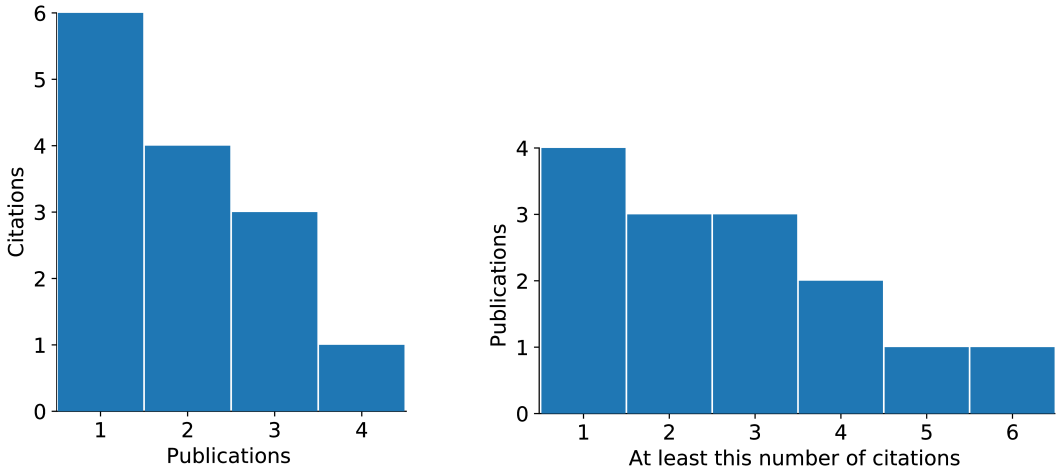}}
\caption{\label{figure:cp} The citation vector is shown on the left and its conjugate partition, the corresponding publication vector, is shown on the right.}
\end{figure}

\begin{enumerate}
\item[] {\bf Self-conjugacy (SC):}
Let ${\bf p}$ be the conjugate partition of the citation vector ${\bf x}$, then $f({\bf p}) = f({\bf x})$.
\end{enumerate}
\smallskip

Clearly the $rec$ and $h$ indices, as well as the citation count, are all self-conjugate.
On the other hand, we note that the publication count and the maximum citation index are conjugates of each other.
Self-conjugacy implies a balanced approach between influence (quality) and prolificity (quantity).

\smallskip

Some indices tend to emphasise influence, for example, the maximum citation index, Euclidean index and $g$-index, whereas others, such as the publication count, emphasise prolificity. Should we wish to emphasise influence rather than prolificity, we may define a version of the $rec$-index, the ${rec}_{_I}$-index, in which we restrict the maximum in ({\ref{eq:ck-index}) to be over those $i$ for which $i \le x_i$.
Conversely, should we wish to emphasise prolificity, we may instead define the conjugate index, the ${rec}_{_P}$-index, by using the corresponding publication vector ${\bf p}$  and restricting $i$ so that $i \le p_i$.

\smallskip

Many people, probably the majority, tend to favour indices that emphasise influence. It is therefore worth noting our findings in \cite{FENN18b}, where the citations of a large number of researchers, from a Google Scholar data set made available by Radicchi and Castellano \cite{RADI13}, were analysed and their $rec$-indices calculated. Table 11 in \cite{FENN18b} shows that $93\%$ of the researchers for which the $\chi$-index was significantly larger than the $h$-index were more influential than prolific, i.e. for these researchers $x_k > k$.
This indicates that, in general, the $rec$-index satisfies the tendency to favour influence.

\smallskip

Following the conclusion in \cite{WILD14} that several bibliometric indicators should be used to gauge the overall impact of a researcher, we suggest that both quality and quantity may be assessed by using the pair of indices $({rec}_{_I}, {rec}_{_P})$.

\smallskip

We now concentrate on the $rec$-index.
A typical citation curve, corresponding to the citation vector $\left<6,4,3,1\right>$, is shown in Figure~\ref{figure:fp}, with circles indicating where a new citation can be added. We observe that each addition of a new citation completes a rectangle.
If the new citation is added to publication $k$, then the newly formed rectangle has width $k$ and height $x_k +1$.
For example, if we add a citation to the third publication, producing the citation vector $\left<6,4,4,1\right>$, then the $rec$-index will increase from $9$ to $12$, but if we add it to the fourth publication, producing the citation vector $\left<6,4,3,2\right>$, then the $rec$-index will not increase.

\begin{figure}[htb]
\centering{\includegraphics[scale=0.08]{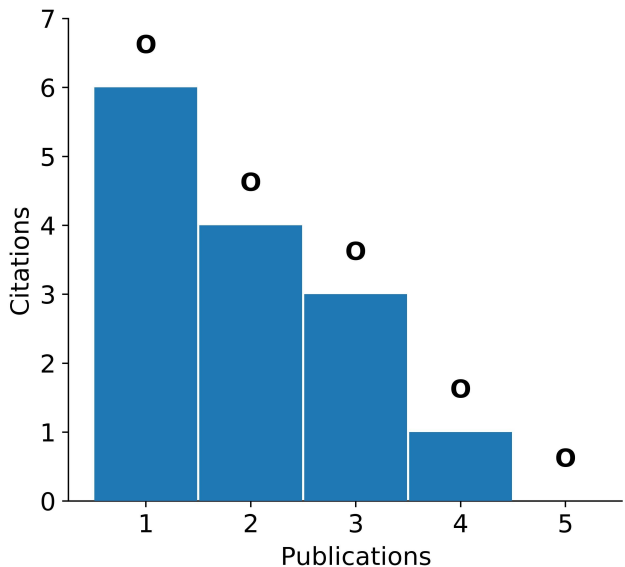}}
\caption{\label{figure:fp} A citation vector with circles indicating where a new citation may be added.}
\end{figure}
\smallskip

For any citation vector ${\bf x} = \left<x_1, x_2, \ldots, x_n\right>$, we write ${\bf x}^{[k]}$ for the citation vector obtained from ${\bf x}$ by adding a single citation to publication $k$, where $1 \le k \le n+1$, thereby increasing its citation count from $x_k$ to $x_k +1$. If $k = n+1$, we assume that $x_k = 0$. We note that $k$ must be the smallest index $j$ for which $x_j = x_k$.

\smallskip

By the definition of the $rec$-index, it is straightforward to see that $rec({\bf x}^{[k]}) \ge k (x_k+1)$, and thus
\begin{equation}\label{eq:rc}
rec({\bf x}^{[k]}) = max(rec({\bf x}), k (x_k+1)).
\end{equation}

\medskip

The following property encapsulates this observation.

\begin{enumerate}
\item[] {\bf Rectangle completion (RC):}
For any citation vector ${\bf x}$,
\begin{displaymath}
f({\bf x}^{[k]}) = max(f({\bf x}), k (x_k +1)).
\end{displaymath}
\end{enumerate}

Noting that {\em rectangle completion} implies {\em monotonicity}, it is then straightforward to verify that a bibliometric index $f$ satisfies {\em rectangle completion} (together with the baseline condition) if and only if it is the $rec$-index.
Obviously, as it essentially encapsulates the definition of the $rec$-index, {\em rectangle completion} is not particularly useful as a characterisation of the $rec$-index; however, it provides an alternative and constructive definition of the index.

\smallskip

Consider the situation when $rec({\bf x}) = k x_k$.
We note that (i) if $x_k = k$, the $rec$-index is equal to the square of the $h$-index, (ii) if $x_k > k$, the researcher tends towards being more influential, and (iii) if $x_k < k$, the researcher tends towards being more prolific. Thus the shape of the maximum area rectangle will indicate an interpretation of the index value. We also note that, when the histogram of the citation curve is a rectangle, the distribution of citations is {\em uniform}. In this case the {\em core} includes all publications.

\smallskip

More formally, we say that a citation vector ${\bf u} = \left<u_1,u_2,\ldots,u_n\right>$  is {\em uniform} if $u_1 = u_2 = \cdots = u_n$.
It follows that $rec({\bf x}) = \lVert {\bf x} \rVert$ if and only if ${\bf x}$ is uniform. This observation suggests the following weaker form of this property.

\begin{enumerate}
\item[] {\bf Uniform citation (UC):}
If the citation vector ${\bf x}$ is uniform then $f({\bf x}) = \lVert {\bf x} \rVert$.
\end{enumerate}
\smallskip

This property makes the reasonable assertion that, if all publications have the same number of citations, they should all be in the core and all citations to them included in the index.

\smallskip

As stated above {\em rectangle completion} is too contrived in the sense that it mimics the definition of the $rec$-index. So we now explore a way to replace it with {\em uniform citation}, which gives a lower bound on the index, together with another property that gives an upper bound on the index. This additional property is:
\begin{enumerate}
\item[] {\bf  Uniform equivalence (UE):}
For any citation vector ${\bf x}$, there exists a uniform citation vector ${\bf u}$ dominated by ${\bf x}$, i.e. ${\bf u} \sqsubseteq {\bf x}$,
for which $f({\bf x}) = f({\bf u})$.
\end{enumerate}
\smallskip

This property asserts that the same number of citations should be included for each publication in the core.

\smallskip

The following proposition, which defines a property that is similar to Axiom D in \cite{WOEG08}, can easily be shown to follow from {\em monotonicity}, {\em uniform citation} and {\em uniform equivalence}.

\begin{proposition}\label{prop:axiomd}
If $f$ satisfies properties {\bf M}, {\bf UC} and {\bf UE} then it also satisfies the following property:
\begin{enumerate}
\item[] {\bf  Citation increase (CI):}
If we add a single citation to each publication in ${\bf x}$, resulting in the citation vector ${\bf y}$, then $f({\bf y}) > f({\bf x})$.
\end{enumerate}
\end{proposition}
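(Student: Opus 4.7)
The plan is to exploit \textbf{UE} to reduce the problem to a uniform vector, apply \textbf{UC} to get a concrete numerical value, and then lift the strict inequality back to ${\bf y}$ via \textbf{M}. Throughout, let ${\bf x} = \langle x_1, \ldots, x_n\rangle$ with $n \ge 1$ (the case $n=0$ is vacuous since then ${\bf y} = {\bf x}$), and let ${\bf y} = \langle x_1+1, \ldots, x_n+1\rangle$.

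First I would apply \textbf{UE} to ${\bf x}$ to obtain a uniform ${\bf u} = \langle c,c,\ldots,c\rangle$ of length $m \le n$ with ${\bf u} \sqsubseteq {\bf x}$ and $f({\bf x}) = f({\bf u})$. Next, I would construct the ``incremented'' uniform vector ${\bf u}^+ = \langle c+1, c+1, \ldots, c+1\rangle$ of the same length $m$. The key observation is that ${\bf u}^+ \sqsubseteq {\bf y}$: indeed, $m \le n$ and, for $1 \le i \le m$, we have $c \le x_i$, hence $c+1 \le x_i+1 = y_i$. By \textbf{UC} applied to both ${\bf u}$ and ${\bf u}^+$, I would get $f({\bf u}) = mc$ and $f({\bf u}^+) = m(c+1) = mc + m$. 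Then \textbf{M} gives $f({\bf y}) \ge f({\bf u}^+) = mc + m$, so that $f({\bf y}) - f({\bf x}) \ge m$, which is strictly positive provided $m \ge 1$.

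The main obstacle (and essentially the only one) is the degenerate case $m = 0$, i.e.\ when the uniform vector guaranteed by \textbf{UE} is empty. In that case the argument above only yields $f({\bf y}) \ge 0$, which is not enough. To handle this, I would observe that in this case $f({\bf x}) = f(\langle\rangle) = 0$ by the baseline condition (or by \textbf{UC} with the convention that an empty sum is $0$). Since $n \ge 1$, the single-entry uniform vector $\langle 1 \rangle$ satisfies $\langle 1\rangle \sqsubseteq {\bf y}$, because $y_1 = x_1 + 1 \ge 1$. Applying \textbf{UC} gives $f(\langle 1 \rangle) = 1$, and then \textbf{M} yields $f({\bf y}) \ge 1 > 0 = f({\bf x})$, completing the proof in the edge case.

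In summary, the structure is: pick the optimal uniform witness from \textbf{UE}, bump every entry by one, verify dominance inside ${\bf y}$, read off the strict increment from \textbf{UC}, and conclude via \textbf{M}; the empty-witness case is handled separately by comparing against $\langle 1\rangle$. No other properties (e.g., rectangle completion or scale invariance) are needed.
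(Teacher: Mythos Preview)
Your proof is correct and is precisely the direct argument the paper alludes to (the paper itself omits the proof, merely stating that the proposition ``can easily be shown to follow'' from the three properties). One minor simplification: since citation vectors consist of strictly positive integers, any nonempty ${\bf x}$ dominates $\langle 1\rangle$, so by {\bf M} and {\bf UC} we already have $f({\bf x}) \ge 1$; consequently the uniform witness ${\bf u}$ supplied by {\bf UE} cannot be empty when $n \ge 1$, and your separate treatment of the $m=0$ case is unnecessary (though harmless).
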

\smallskip

However, {\em citation increase} together with {\em monotonicity} and {\em uniform citation} do not imply {\em uniform equivalence}, as is the case for the bibliometric index,
\begin{equation}\label{eq:avg}
f({\bf x}) = \frac{rec({\bf x}) + \lVert {\bf x} \rVert}{2},
\end{equation}
which satisfies {\em citation increase} but not {\em uniform equivalence}.

\section{Axiomatic characterisation of the $rec$-index}
\label{sec:proof}

The rankings induced by several bibliometric indices, including the $h$-index and $g$-index, were characterised in \cite{MARC09,BOUY14},
whereas, in \cite{WOEG08,WOEG08a} and \cite{QUES11,QUES11a}, the authors concentrated on characterising the $h$-index and $g$-index directly.
These characterisations, as well as that presented here, are based on properties that address three types of issues.
First, the inclusion of fundamental properties like baseline and monotonicity should be considered.
The second issue is concerned with the conditions under which the value of the index increases (see, for example, {\em citation increase}).
The third issue considers what changes to the citation vector leave the index unchanged (for example, the $h$-index satisfies the property of {\em independence of irrelevant citations} \cite{QUES11a}, cf. \cite{KONG14}, which essentially states that adding a single citation to a core publication does not increase the index). Another category of properties that are important for characterising bibliometric indices are invariants, such as {\em scale invariance} and {\em self-conjugacy}, which give transformations that change the index in a predictable manner or do not change the value at all.

\smallskip

The main result in this section is Theorem~\ref{th:chi2}, which provides an axiomatic characterisation of the $rec$-index.
We also show in Proposition~\ref{prop:chi} that the $\chi$-index satisfies the desirable property that, subsequent to the addition of a single citation to the citation vector, the $\chi$-index cannot increase by more than one.

\begin{theorem}\label{th:chi2}
A bibliometric index $f$ satisfies the three properties of monotonicity, uniform citation and uniform equivalence if and only if it is the $rec$-index.
\end{theorem}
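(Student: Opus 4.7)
The plan is to prove both directions, noting that the interplay of the three axioms is essentially a matching squeeze argument.

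For the forward direction (the $rec$-index satisfies \textbf{M}, \textbf{UC}, \textbf{UE}), I would verify each property directly from the definition $rec(\mathbf{x}) = \max_i i x_i$. Monotonicity is immediate from the componentwise inequality $x_i \le y_i$ and $n \le m$. Uniform citation follows because, when $x_1 = \cdots = x_n = c$, the maximum of $i c$ over $1 \le i \le n$ is $n c = \lVert \mathbf{x} \rVert$. For uniform equivalence, given $\mathbf{x}$, I would let $k$ be an index attaining the maximum in (\ref{eq:ck-index}) and take $\mathbf{u} = \langle x_k, x_k, \ldots, x_k \rangle$ of length $k$; since $x_i \ge x_k$ for $i \le k$ (as $\mathbf{x}$ is sorted), $\mathbf{u} \sqsubseteq \mathbf{x}$, and $rec(\mathbf{u}) = k x_k = rec(\mathbf{x})$.

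For the converse, suppose $f$ satisfies \textbf{M}, \textbf{UC} and \textbf{UE}. I would establish two inequalities. For the upper bound $f(\mathbf{x}) \le rec(\mathbf{x})$: by \textbf{UE} there is a uniform $\mathbf{u} \sqsubseteq \mathbf{x}$, of length some $j$ with common value $c$, such that $f(\mathbf{x}) = f(\mathbf{u})$. By \textbf{UC}, $f(\mathbf{u}) = \lVert \mathbf{u} \rVert = jc$. Since $\mathbf{u} \sqsubseteq \mathbf{x}$, we have $x_j \ge c$, hence $rec(\mathbf{x}) \ge j x_j \ge jc = f(\mathbf{x})$. For the matching lower bound $f(\mathbf{x}) \ge rec(\mathbf{x})$: let $k$ attain the maximum in (\ref{eq:ck-index}) and set $\mathbf{u}^* = \langle x_k, x_k, \ldots, x_k \rangle$ of length $k$; then $\mathbf{u}^* \sqsubseteq \mathbf{x}$, so by \textbf{M} we have $f(\mathbf{x}) \ge f(\mathbf{u}^*)$, and by \textbf{UC} we have $f(\mathbf{u}^*) = k x_k = rec(\mathbf{x})$.

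Putting the two bounds together gives $f(\mathbf{x}) = rec(\mathbf{x})$, completing the characterisation. There is no real obstacle here: the axioms have been tailored so that \textbf{UE} plus \textbf{UC} deliver an upper bound via some uniform subvector, while \textbf{M} plus \textbf{UC} applied to the \emph{specific} uniform subvector realising the maximum of $i x_i$ deliver the matching lower bound. The only subtlety worth flagging is that \textbf{UE} alone does not identify \emph{which} uniform vector $\mathbf{u}$ is chosen; the proof works precisely because monotonicity lets us test $f$ against the particular uniform vector of length $k$ and height $x_k$ that the $rec$-index itself singles out.
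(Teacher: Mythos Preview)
Your proof is correct and follows essentially the same squeeze argument as the paper: the lower bound via \textbf{M}+\textbf{UC} applied to the optimal uniform subvector, and the upper bound via \textbf{UE}+\textbf{UC} applied to the witness vector. The only cosmetic difference is that the paper phrases the lower-bound vector abstractly as the uniform $\mathbf{u}\sqsubseteq\mathbf{x}$ of maximal $\lVert\mathbf{u}\rVert$ (and then bounds the UE witness $\mathbf{v}$ by $\lVert\mathbf{v}\rVert\le\lVert\mathbf{u}\rVert$), whereas you construct it explicitly as $\langle x_k,\ldots,x_k\rangle$ and bound the UE witness directly against $rec(\mathbf{x})$.
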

\smallskip

\begin{proof}
It is clear from Section~\ref{sec:axioms} that the $rec$-index satisfies these properties, so it remains to prove that they are sufficient.

\smallskip

Let ${\bf x}$ be a citation vector and let ${\bf u}$ be a uniform citation vector such that
$\lVert {\bf u} \rVert$ is maximal for all ${\bf u} \sqsubseteq {\bf x}$. Clearly, $rec({\bf x}) = \lVert {\bf u} \rVert$.
So, by {\em monotonicity} and {\em uniform citation}, we have $f({\bf x}) \ge f({\bf u}) = \lVert {\bf u} \rVert$.
By {\em uniform equivalence}, we may let ${\bf v}$ be a uniform citation vector such that ${\bf v} \sqsubseteq {\bf x}$ and $f({\bf x}) = f({\bf v})$,
and therefore $f({\bf x}) = \lVert {\bf v} \rVert$ by {\em uniform citation}. Since $\lVert {\bf u} \rVert \ge \lVert {\bf v} \rVert$ by the definition of ${\bf u}$, it follows that $f({\bf x}) = \lVert {\bf u} \rVert = rec({\bf x})$.
\end{proof}
\smallskip

The following two corollaries show that we may replace two of the properties in Theorem~\ref{th:chi2} by two simpler properties together with the intuitive property of {\em scale invariance}.

\begin{corollary}\label{cor:chi2}
The result of Theorem~\ref{th:chi2} holds if we replace monotonicity by the following more restrictive property.

\begin{enumerate}
\item[] {\bf  Uniform monotonicity (UM):}
If ${\bf x}$ is uniform and ${\bf x} \sqsubseteq {\bf y}$, then $f({\bf x}) \le f({\bf y})$.
\end{enumerate}
\smallskip
\end{corollary}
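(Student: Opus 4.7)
The plan is to observe that the proof of Theorem~\ref{th:chi2} invokes monotonicity exactly once, and that this invocation already falls within the scope of uniform monotonicity. Thus the same argument, essentially unchanged, yields the corollary.

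In more detail, for an arbitrary citation vector $\mathbf{x}$, I would first let $\mathbf{u}$ be a uniform citation vector with $\mathbf{u} \sqsubseteq \mathbf{x}$ and $\lVert \mathbf{u} \rVert$ maximal among such vectors; as in the proof of Theorem~\ref{th:chi2}, $rec(\mathbf{x}) = \lVert \mathbf{u} \rVert$. Because $\mathbf{u}$ is uniform, (UM) applies to the pair $\mathbf{u} \sqsubseteq \mathbf{x}$ and gives $f(\mathbf{u}) \le f(\mathbf{x})$, so combining with (UC) yields $f(\mathbf{x}) \ge \lVert \mathbf{u} \rVert = rec(\mathbf{x})$. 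For the reverse direction, (UE) supplies a uniform $\mathbf{v} \sqsubseteq \mathbf{x}$ with $f(\mathbf{x}) = f(\mathbf{v})$, and (UC) gives $f(\mathbf{v}) = \lVert \mathbf{v} \rVert$; by the maximality of $\lVert \mathbf{u} \rVert$, we have $\lVert \mathbf{v} \rVert \le \lVert \mathbf{u} \rVert$, hence $f(\mathbf{x}) \le rec(\mathbf{x})$.

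The only point worth checking carefully is that the single use of monotonicity in the proof of Theorem~\ref{th:chi2} really is an instance of (UM), i.e.\ that the vector on the left of the domination relation is uniform. Inspecting the original proof shows that this is indeed the case: the inequality $f(\mathbf{u}) \le f(\mathbf{x})$ is applied with $\mathbf{u}$ constructed to be uniform, and no other appeal to (M) is made. Consequently there is no real obstacle, and the corollary follows by substituting (UM) for (M) in each of the two displayed inequalities above.
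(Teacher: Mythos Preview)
Your proposal is correct and matches the paper's intended reasoning: the paper states this corollary without an explicit proof, evidently because the single appeal to monotonicity in the proof of Theorem~\ref{th:chi2} is applied with a uniform vector on the left, exactly as you observe. There is nothing to add.
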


\begin{corollary}\label{cor:si}
The result of Theorem~\ref{th:chi2} holds if we replace uniform citation by the following more restrictive property together with scale invariance.

\begin{enumerate}
\item[] {\bf  Uniform single citation (USC):}
If  $x_1 = x_2 = \cdots = x_n = 1$ for the citation vector ${\bf x}$, then $f({\bf x}) = \lVert {\bf x} \rVert = n$.
\end{enumerate}
\end{corollary}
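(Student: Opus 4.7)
The plan is to deduce the corollary directly from Theorem~\ref{th:chi2} by showing that, in the presence of \textbf{SI}, the property \textbf{USC} already implies the full property \textbf{UC}. Once \textbf{UC} is established, the hypotheses of the corollary reduce to those of Theorem~\ref{th:chi2}, and the conclusion follows immediately.

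More concretely, let ${\bf x} = \langle C, C, \ldots, C\rangle$ be an arbitrary uniform citation vector with $n$ entries, where $C$ is a positive integer (the empty case is handled by the baseline condition). Let ${\bf y} = \langle 1, 1, \ldots, 1\rangle$ with $n$ ones, so that ${\bf x} = C{\bf y}$ and $C{\bf y}$ is again a valid citation vector of positive integers. By \textbf{USC}, $f({\bf y}) = n$, and by \textbf{SI} applied with the positive constant $C$, we obtain
\begin{displaymath}
f({\bf x}) = f(C{\bf y}) = C f({\bf y}) = Cn = \lVert {\bf x} \rVert ,
\end{displaymath}
which is precisely \textbf{UC}. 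Hence $f$ satisfies monotonicity, uniform citation and uniform equivalence, and Theorem~\ref{th:chi2} yields $f = rec$.

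There is no real obstacle here: the argument is a one-step reduction, and the only thing worth checking is that scale invariance can legitimately be invoked, i.e.\ that the scaled vector $C{\bf y}$ consists of positive integers, which it does since $C$ is a positive integer. Note also that the converse direction (that the $rec$-index satisfies \textbf{USC} and \textbf{SI}) is immediate from the discussion of scale invariance in Section~\ref{sec:axioms} together with the fact that $rec(\langle 1, \ldots, 1 \rangle) = n$ for the all-ones vector of length $n$.
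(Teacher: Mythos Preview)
Your proof is correct and takes essentially the same approach as the paper: both argue that \textbf{USC} together with \textbf{SI} imply \textbf{UC}, after which Theorem~\ref{th:chi2} applies directly. The paper's proof is simply the one-line assertion ``Clearly \textbf{USC} and \textbf{SI} imply \textbf{UC}'', whereas you have spelled out the straightforward verification explicitly.
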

\smallskip

\begin{proof}
Clearly {\bf USC} and {\bf SI} imply {\bf UC}.
\end{proof}
\smallskip

It is not difficult to demonstrate that the three properties of Theorem~\ref{th:chi2} characterising the $rec$-index are independent, i.e. omitting any one of them would render the theorem false.

\begin{proposition}\label{prop:indep}
The three properties of Theorem~\ref{th:chi2} characterising the $rec$-index are independent.
\end{proposition}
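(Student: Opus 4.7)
The plan is to prove independence by exhibiting, for each of the three characterising properties, a bibliometric index that satisfies the other two but violates the one in question. Each candidate should be elementary, so that the verifications reduce to direct checks from the definitions and the baseline condition; the main task is simply to pick the right witnesses.

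For the independence of monotonicity, I would use the ``minimum rectangle'' index $f({\bf x}) = n \cdot x_n$ (with $f(\left<\right>) = 0$), i.e.\ the area of the rectangle whose width is the number of publications and whose height is the smallest citation count. When ${\bf x}$ is uniform, $f({\bf x}) = n \cdot x_1 = \lVert {\bf x} \rVert$, so \textbf{UC} holds; and for any ${\bf x}$ of length $n$, the uniform vector ${\bf u} = \left<x_n, \ldots, x_n\right>$ of length $n$ is dominated by ${\bf x}$ and satisfies $f({\bf u}) = n x_n = f({\bf x})$, giving \textbf{UE}. Monotonicity fails because appending a lightly-cited publication can collapse $x_n$: for instance, $\left<5,5,5\right> \sqsubseteq \left<5,5,5,1\right>$ but $f$ drops from $15$ to $4$.

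For the independence of \textbf{UC}, the constant-zero index $f \equiv 0$ satisfies the baseline condition, \textbf{M} trivially, and \textbf{UE} with ${\bf u} = \left<\right>$ as witness in every case; but for any non-empty uniform vector, $f({\bf x}) = 0 \ne \lVert {\bf x} \rVert$, so \textbf{UC} fails. For the independence of \textbf{UE}, I would take the citation count itself, $f({\bf x}) = \lVert {\bf x} \rVert$: \textbf{M} is immediate and \textbf{UC} is tautological, but for any non-uniform ${\bf x}$ and any uniform ${\bf u} \sqsubseteq {\bf x}$ of length $m$, we have $\lVert {\bf u} \rVert = m u_1 \le m x_m \le rec({\bf x}) < \lVert {\bf x} \rVert$, so no dominated uniform vector matches $f({\bf x})$ and \textbf{UE} fails.

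No step is really difficult here; the only mild subtlety is the last chain of inequalities, which uses the fact that the largest $\ell_1$-norm among uniform vectors dominated by ${\bf x}$ is exactly $rec({\bf x})$, together with the observation that enforcing ${\bf u} \sqsubseteq {\bf x}$ caps the common entry of ${\bf u}$ by $x_m$. Everything else is a one-line check against the definitions.
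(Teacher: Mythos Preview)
Your proof is correct and follows the same overall strategy as the paper: exhibit, for each property, an index satisfying the other two but not it. Your witness for the failure of \textbf{M} is identical to the paper's ($n x_n$), but your other two choices differ. For the failure of \textbf{UE} the paper uses $f({\bf x}) = \tfrac{1}{2}\bigl(rec({\bf x}) + \lVert {\bf x} \rVert\bigr)$, whereas you use the citation count $\lVert {\bf x} \rVert$ itself; your choice is cleaner and your justification (that the maximal $\lVert {\bf u} \rVert$ over uniform ${\bf u} \sqsubseteq {\bf x}$ is exactly $rec({\bf x})$, which the paper already established is strictly below $\lVert {\bf x} \rVert$ for non-uniform ${\bf x}$) is airtight. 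For the failure of \textbf{UC} the paper lists several natural indices ($h^2$, $n$, $x_1$, etc.), while you use the zero index with the empty vector as the \textbf{UE} witness; this is valid since the empty vector is vacuously uniform and dominated by every citation vector, though the paper's examples have the mild advantage of being ``genuine'' indices one might actually consider. Both routes are equally rigorous; yours is a touch more economical.
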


\begin{proof}
The following examples justify this claim.
\renewcommand{\labelenumi}{\alph{enumi})}
\begin{enumerate}
\item The index defined in (\ref{eq:avg}) satisfies {\bf M} and {\bf UC} but not {\bf UE}.
\item The square of the $h$-index, the publication count $n$, the maximum citation index $x_1$, $max(n,x_1)$ and $min(n,x_1)$
all satisfy {\bf M} and {\bf UE} but not {\bf UC}.
\item The product $n x_n$ of the number of publications and the minimum number of citations satisfies {\bf UC} and {\bf UE} but not {\bf M}, nor
{\bf UM}.
\end{enumerate}
\end{proof}

To summarise the properties that characterise the $rec$-index: {\bf UM} or {\bf M} implies that adding new citations will not decrease the value of the index, while {\bf UC} and {\bf UE} provide, respectively, lower and upper bounds on its value.
We recall that, since the $rec$-index is not strictly monotonic, some citations may not contribute to the value of the index;
however, all core publications contribute an equal number of citations.

\medskip

We now present a construction that gives rise to a property equivalent to {\bf UE}.
While indices, such as the citation count (\ref{eq:cit}), that include the full set of citations are strictly monotonic, indices, such as the $rec$-index, that include just a core set of publications typically increase only after a batch of citations has been added. The construction we now present captures this aspect of the $rec$-index.

\smallskip

Consider a sequence of citation vectors
\begin{equation}\label{eq:s-seq}
{\bf S} \ = \ {\bf x}_1, {\bf x}_2, \ldots,  {\bf x}_i, \ldots, {\bf x}_s,
\end{equation}
where ${\bf x}_1 = \left<\right>$ and ${\bf x}_s = {\bf x}$.
When ${\bf x}_i \sqsubseteq {\bf x}_{i+1}$ and
$\lVert {\bf x}_{i+1} \rVert - \lVert {\bf x}_i \rVert =1$, for $1 \le i \le s-1$,
we say that ${\bf S}$ is a {\em constructive} sequence for ${\bf x}$.

\smallskip

We are interested in constructive sequences satisfying the property that, for each $i$, if $f({\bf x}_i) < f({\bf x}_{i+1})$ then ${\bf x}_{i+1}$ is a uniform citation vector; we call such a constructive sequence $f$-{\em incremental}.
This suggests the following property of a citation index.

\begin{enumerate}
\item[] {\bf  Uniform increment (UI):}
For any citation vector ${\bf x}$, there exists an $f$-{\em incremental} constructive sequence for ${\bf x}$.
\end{enumerate}
\smallskip

It is not difficult to prove that {\bf UI} implies {\bf UE}. Moreover, the $rec$-index satisfies {\bf UI}, and we now present one method for constructing a $rec$-{\em incremental} sequence for a citation vector ${\bf x}$.

\renewcommand{\labelenumi}{(\roman{enumi})}
\begin{enumerate}
\item Start from $\left<\right>$ and construct a sequence of uniform citation vectors as follows.
\item From the uniform citation vector ${\bf u}$, add citations one-by-one to obtain a new uniform citation vector, either by adding a new column (i.e. a new publication) or by adding a new row (cf. property {\bf CI}) to the rectangle corresponding to ${\bf u}$.
\item Repeat step (ii) until we obtain a citation vector ${\bf u}$ for which $rec({\bf u}) = rec({\bf x})$.
\item Add the remaining citations one-by-one in any order until we obtain ${\bf x}$.
\end{enumerate}
\smallskip

It is straightforward to show that it is always possible, in step (ii) above, to choose between adding a column or a row in such a way that the sequence is $rec$-{\em incremental}.

\medskip

It may be argued that it is natural that a one-dimensional bibliometric index should not increase by more than one when a single citation is added to the citation vector. We now prove that this holds for the $\chi$-index.

\begin{proposition}\label{prop:chi}
Let ${\bf x}$ be a citation vector and let ${\bf x}^{[k]}$ be a citation vector obtained by adding a single citation to ${\bf x}$.
Then $\chi({\bf x}^{[k]}) \le \chi({\bf x}) +1$.
\end{proposition}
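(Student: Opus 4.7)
My plan is to turn the bound on $rec$ from equation~(\ref{eq:rc}) into a bound on $\chi=\sqrt{rec}$ by controlling the additive increment $rec({\bf x}^{[k]})-rec({\bf x})$ by $\sqrt{rec({\bf x}^{[k]})}$, and then rationalising. Writing $R=rec({\bf x})$ and $R'=rec({\bf x}^{[k]})$, equation~(\ref{eq:rc}) gives $R'=\max(R,\,k(x_k+1))$, so either $R'=R$ (and the result is immediate) or $R'=k(x_k+1)>R$; I only need to treat the latter case.

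The main step is to derive two complementary lower bounds on $R$. First, from the definition $R=\max_i i x_i$, we always have $R\ge kx_k$, which yields $R'-R\le k$. Second, since ${\bf x}^{[k]}$ must also be sorted in decreasing order, the constraint recalled just before equation~(\ref{eq:rc}) forces $x_{k-1}>x_k$, and hence $x_{k-1}\ge x_k+1$, whenever $k\ge 2$; consequently
\[
R\ge (k-1)x_{k-1}\ge (k-1)(x_k+1),
\]
which yields $R'-R\le x_k+1$. (For $k=1$ this bound is trivial, and for $k=n+1$ one uses $x_{k-1}=x_n\ge 1$.) Combining the two bounds and applying the AM--GM inequality $\min(a,b)\le\sqrt{ab}$,
\[
R'-R \;\le\; \min\bigl(k,\,x_k+1\bigr) \;\le\; \sqrt{k(x_k+1)} \;=\; \sqrt{R'}.
\]

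Finally, I would convert this into the desired bound on $\chi$ via
\[
\chi({\bf x}^{[k]})-\chi({\bf x}) \;=\; \sqrt{R'}-\sqrt{R} \;=\; \frac{R'-R}{\sqrt{R'}+\sqrt{R}} \;\le\; \frac{\sqrt{R'}}{\sqrt{R'}+\sqrt{R}} \;\le\; 1,
\]
the division being legitimate since $R'>R\ge 0$ forces $\sqrt{R'}+\sqrt{R}>0$; the edge case $R=0$ (the empty citation vector, where $R'=1$) can be checked directly. The main obstacle is spotting the second lower bound $R\ge(k-1)(x_k+1)$: without it one only gets $R'-R\le k$, which is not enough to yield a uniform bound of $1$ on $\sqrt{R'}-\sqrt{R}$. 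It is precisely the sortedness of ${\bf x}^{[k]}$ that supplies the missing information.
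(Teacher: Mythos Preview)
Your argument is correct and essentially identical to the paper's own proof: the same case split via equation~(\ref{eq:rc}), the same pair of lower bounds $rec({\bf x})\ge kx_k$ and $rec({\bf x})\ge(k-1)(x_k+1)$, the same passage $\min(k,x_k+1)\le\sqrt{k(x_k+1)}$, and the same rationalisation at the end. If anything, you are slightly more explicit than the paper in justifying the second lower bound from the sortedness constraint and in handling the edge cases $k=1$ and $k=n+1$.
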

\smallskip

\begin{proof}

We recall that ${\bf x}^{[k]}$ is obtained from ${\bf x}$ by adding a single citation to publication $k$, thereby increasing its citation count to $x_k +1$.
If $rec({\bf x}^{[k]}) = rec({\bf x})$ the result holds trivially. So we may assume that $rec({\bf x}^{[k]}) \not= rec({\bf x})$, in which case $rec({\bf x}^{[k]}) = k (x_k + 1)$ by (\ref{eq:rc}).

\smallskip

Now, since ${\bf x}^{[k]}$ is a citation vector, we must have
\begin{displaymath}
rec({\bf x}) \ge  max \left(k x_k, (k-1) (x_k+1)\right).
\end{displaymath}

Therefore,
\begin{displaymath}
rec({\bf x}) \ge rec({\bf x}^{[k]}) - min (k, x_k+1) \ge rec({\bf x}^{[k]}) - \sqrt{k (x_k+1)},
\end{displaymath}
and thus
\begin{displaymath}
rec({\bf x}^{[k]}) - rec({\bf x}) \le \chi({\bf x}^{[k]}).
\end{displaymath}

It follows that
\begin{displaymath}
\chi({\bf x}^{[k]}) - \chi({\bf x}) \le \frac{\chi({\bf x}^{[k]})}{\chi({\bf x}^{[k]}) + \chi({\bf x})} \le 1.
\end{displaymath}.

\end{proof}
\smallskip

Finally, we note that if it is required that the $\chi$-index be an integer, then the ceiling function, which maps
$\chi({\bf x})$ to the least integer greater than or equal to $\chi({\bf x})$, can be employed.

\section{Concluding remarks}
\label{sec:conc}

Geometric indices, such as the $rec$-index, capture a core set of the publications that represent a researcher's total output. The $\chi$-index (which is equal to the square root of the $rec$-index) can be viewed as a generalisation of the $h$-index, and has the advantage that it allows us to distinguish between more influential and more prolific researchers, depending on whether the height of the largest area rectangle under the citation curve is greater than or less than its width, respectively.

\smallskip

We presented several properties that are satisfied by the $rec$-index and proved, in Theorem~\ref{th:chi2}, that the three properties of {\em monotonicity}, {\em uniform citation} and {\em uniform equivalence} characterise the $rec$-index. While monotonicity is a very natural property for any bibliometric index, uniform citation and uniform equivalence are natural when it is required to include the same number of citations for each publication in the core.

\smallskip

Geometric indices, such as the $rec$-index, give better insight into the relationship between influence (quality) and prolificity (quantity) than indices, such as the $h$-index, that are more constrained in this respect.

\section*{Acknowledgements}

The authors would like to thank the reviewers for their constructive comments, which have helped us to improve the paper.


\newcommand{\etalchar}[1]{$^{#1}$}

\end{document}